\newtheorem{theorem}{Theorem}[section]
\newtheorem{lemma}[theorem]{Lemma}
\newtheorem{definition}{Definition}[section]
\newcommand{\bsmat}{\begin{bmatrix} }
\newcommand{\esmat}{\end{bmatrix} }
\newcommand{\HLinkShort}[2]{\hyperref[#2]{#1\ref*{#2}}}
\newcommand{\HLink}[2]{\hyperref[#2]{#1~\ref*{#2}}}
\newcommand{\HLinkPage}[2]{\hyperref[#2]{#1~\ref*{#2}%
		$_\text{p\pageref{#2}}$}}
\newcommand{\HLinkPageOnly}[1]{\hyperref[#1]{Page~\refpage*{#1}%
		$_\text{p\pageref{#1}}$}}
\newcommand{\HLinkSuffix}[3]{\hyperref[#2]{#1\ref*{#2}{#3}}}
\newcommand{\HLinkPageSuffix}[3]{\hyperref[#2]{#1\ref*{#2}%
		#3$_\text{p\pageref{#2}}$}}
\providecommand{\eqlab}[1]{}%
\renewcommand{\eqlab}[1]{\label{equation:#1}}
\begin{document}

\title{\bf Distances Release with Differential Privacy in Tree and Grid Graph}

\author{\vspace{0.5in}\\\textbf{Chenglin Fan} and \textbf{Ping Li} \\\\
Cognitive Computing Lab\\
Baidu Research\\
10900 NE 8th St. Bellevue, WA 98004, USA\\
  \texttt{\{chenglinfan2020,\ pingli98\}@gmail.com}
}
\date{\vspace{0.4in}}
\maketitle

\begin{abstract}

\vspace{0.4in}

	\noindent\footnote{The content of this paper was initially submitted in December 2020.}Data about individuals may contain private and sensitive information. The  differential privacy (DP)  was proposed to address the problem of protecting
		the privacy of each individual while keeping useful information about a population.
		\citet{DBLP:conf/pods/Sealfon16} introduced a private graph
		model  in which the
		graph topology   is assumed to be public while the weight information  is assumed to be private.  That model can express hidden congestion patterns in a known transportation system. In this paper, we revisit the problem of  privately releasing approximate distances between all pairs of vertices in~\citet{DBLP:conf/pods/Sealfon16}. Our goal  is to minimize the  additive error,  namely the difference between the released distance and actual distance under private setting.
		We propose improved solutions to that problem
		for several cases.\\
		
	\noindent For the problem of privately releasing all-pairs distances, we show that for  tree with depth $h$,  we can
release all-pairs distances with   additive error $O(\log^{1.5} h \cdot \log^{1.5} V)$   for fixed privacy parameter where $V$ the number of vertices in the tree, which improves the previous  error bound $O(\log^{2.5} V)$, since the size of $h$  can be as small as $O(\log V)$. Our result implies that  a $\log V$ factor is saved, and the additive  error in tree can be smaller than the error on array/path.
Additionally, for the grid graph with arbitrary edge weights, we also propose a method  to
release  all-pairs distances with additive error
$\tilde O(V^{3/4}) $ for fixed privacy parameters.   On the application side, many  cities like Manhattan are composed of  horizontal streets and vertical avenues, which can be modeled as a grid graph.
\end{abstract}

\newpage

\section{Introduction}

It has been a popular topic of research that machine learning practitioners hope to protect user privacy while effectively building machine learning models from the data. The motivation for differential privacy (DP)~\citep{DBLP:conf/pods/BlumDMN05,DBLP:conf/tcc/ChawlaDMSW05,DBLP:conf/icalp/Dwork06} is to keep useful information for model learning while protecting the privacy for individuals. The formulation of DP provides a rigorous guarantee that an adversary could learn very little about an individual.

For the graph problem under private setting, there exists a line of
works in the last decade or so~\citep{DBLP:conf/icdm/HayLMJ09,DBLP:conf/pods/RastogiHMS09,DBLP:conf/soda/GuptaLMRT10,DBLP:journals/pvldb/KarwaRSY11,DBLP:conf/tcc/GuptaRU12,DBLP:conf/innovations/BlockiBDS13,DBLP:conf/tcc/KasiviswanathanNRS13,DBLP:conf/focs/BunNSV15,DBLP:conf/pods/Sealfon16,  DBLP:conf/nips/UllmanS19,DBLP:conf/focs/BorgsCSZ18,DBLP:conf/nips/AroraU19} including node privacy, edge privacy, and weight privacy.
In this paper, we study differential privacy for the ``weight private graph model''~\citep{DBLP:conf/pods/Sealfon16}, particularly in tree and grid graph.
As the name suggests, in the weight private graph model, the topology of the graph is public but the weights are private. The weight private model can be well-suited, for example, for modeling the traffic navigation system~\citep{DBLP:conf/pods/Sealfon16}.

For two neighboring input graphs with weight functions differing by one unit, it is obvious that the  single pair shortest distance can only differ by at most one for two neighboring inputs, since the short path is a simple path with edges appearing at most once. To achieve privacy for single pairs, a popular strategy is by adding Laplace  noise according to the $O(1/\epsilon)$ based Laplace mechanism~\citep{DBLP:conf/icalp/Dwork06}. Since it is a trivial task to achieve privacy for a single pair,
\citet{DBLP:conf/pods/Sealfon16} focused on the more difficult task for releasing all-pairs distances privately.
The author showed that one can release all-pairs distances with additive error $O(\log^{2.5}
V)$ on trees, where $V$ is the number of vertices in the tree. For general graphs, the author proposed a simple approach that achieves $\tilde O(V)$ error, which was then improved when the weights are all bounded.

\vspace{0.1in}

In this paper, we revisit the problem of releasing all pairwise distances in the private graph model. Here we summarize our new results,  compared with the previous results obtained in~\citet{DBLP:conf/pods/Sealfon16}.
The additive error  is the largest absolute difference between the released distance and
the actual distance among all node pairs, which applies to both this paper and reference~\citep{DBLP:conf/pods/Sealfon16}.

\begin{itemize}
    \item
For a  tree with depth $h$, we propose  a new   algorithm to release all-pairs distances each with error  $O((\log^{1.5} h) \cdot( \log^{1.5} V ))$   for fixed privacy parameters, which is a significant improvement to previous additive error $O(\log^{2.5}
V)$~\citep{DBLP:conf/pods/Sealfon16}.
Our method is  based on heavy path decomposition~\citep{DBLP:journals/siamcomp/HarelT84}:
We divide a tree into disjoint heavy paths and light paths (the definition about heavy path decomposition is provided later in the paper). The unique path between any pair crosses at most $\log V$ heavy paths. Each heavy path is a path graph, where the releasing of  approximate all-pairs distances is equivalent to
query release of threshold functions. The results of~\citet{DBLP:conf/stoc/DworkNPR10} yield the
same error bound as  the error bound in computing distances on the path graph in~\citet{DBLP:conf/pods/Sealfon16}. We use their method as a subroutine to deal with each heavy path after heavy path decomposition~\citep{DBLP:journals/siamcomp/HarelT84}.   General graph  in metric space can be embedded  into a tree with expected  distortion $O(\log V)$ and bounded depth $O(\log V)$ by ``Padded Decomposition''~\citep{548477}, where distortion is the factor between distance/length in graph and distances in tree.  Hence the transportation network can be embedded into a depth bounded tree network.
    Our private algorithm on tree cases likely results in better private algorithms on a general graph later. On the practical side,  many internet networks are  tree networks, or star-bus networks, which can be modeled as a tree.
    Hence trees are a natural case that deserve to be studied.

\item  For a weight bounded graph $\mathbf{G}(\mathbf{V},\mathbf{E})$, the previous work~\citep{DBLP:conf/pods/Sealfon16}  picked a subset $Z\in \mathbf{V}$ of vertices to form the so called  ``$k$-covering set''. A $k$-covering set $S$ guarantees that
any vertex in $\mathbf{G}$ has   at most $k$ hops to its closest hub in $S$. One can use the $k$-covering set to approximate the original graph.  Each vertex $u$ can  map to its closest  vertex $z_u$ in the covering set.
For any pair $u,v$, the distance between $u$ and $v$ can
be approximated by the distance between $z_u$ and $z_v$ with additional error
$O(kM)$.  Thus, the solution, in general, is to use $O(|Z|^2)$ pair distances to represent/approximate the $O(V^2)$ pair distances with additional error $kM$, where $M$ is the upper bound of edge weight. As a special case, for bounded grid graphs, the authors also gave an error bound where $k=V^{1/3}$ approximately. However, it is unknown how to generalize this approach to general graphs (with arbitrary weights). To shed light on this problem, in this paper, we first consider the grid graph with general positive weights.  We divide the grid graph into blocks and then we separate the distances into several types: 1) the distances between those vertices in each block; 2) the distances between pairs of vertices on the boundary of blocks; and 3) the distances not included in  neither type 1 nor  type 2, but composed of type 1 and type 2 distances with concatenation.  Our method could
release  all-pairs distances on general grid graphs with additive error
$\tilde O (V^{3/4})$ for fixed privacy parameters; more details are given later in the paper.  We believe that our idea can be extended to more general graphs.

\end{itemize}

	\section{Background: Private Graph Model}

	For readability, we adopt the same notions as  used in~\citet{DBLP:conf/pods/Sealfon16}.
	For a general graph $\mathbf{G}=(\mathbf{V},\mathbf{E})$, throughout the paper we use $w(e)$ to denote the original weight of any edge $e\in \mathbf{E}$, and for any subset $\mathbf{E'}\subseteq \mathbf{E}$ we denote $w(\mathbf{E'}) = \sum_{e\in \mathbf{E'}} w(e)$.
	Let $V=|\mathbf{V}|$, $E=|\mathbf{E}|$, and for simplicity we assume $\mathbf{G}$ is connected and hence we always have $V-1\leq E$.

	Let $P_{xy}$ denote the set of simple paths between a pair of vertices $x, y \in \mathbf{V}$. For any path $P \in P_{xy}$, the
	weight $w(P)$ is the summation $\sum
	_{e\in P} w(e)$ of the edge weights in $P$. The distance $d(x, y)$ from $x$
	to $y$ denote the weighted distance $\min_{P \in P_{xy} } w(P)$.
	We first introduce the definition of  differential privacy (DP) in the private edge weight model~\citep{DBLP:conf/pods/Sealfon16}. The following notion of neighboring graphs will be used.
	
	\vspace{0.1in}
	
	\begin{definition}[Neighboring graphs] \label{def:neighbor}
		For any edge set $\mathbf{E}$, two weight functions $w, w'
		: \mathbf{E}\rightarrow R^{+}$      are neighboring,
		denoted $w  \sim w'$ if
		
		$$ ||w-w'||_1=\sum_{e\in \mathbf{E}} |w(e)-w'(e)| \leq 1. $$
	\end{definition}
	
	\vspace{0.1in}	
	
	\begin{definition}[Differential Privacy in graph model~\citep{DBLP:conf/pods/Sealfon16}]  \label{def:DP}
		For any graph  $\mathbf{G}=(\mathbf{V},\mathbf{E})$ let $\mathbf{A}$ be an algorithm that takes as input a weight function $w
		: E \rightarrow R^{+}$.   If for all pairs of neighboring graphs with weight $w,w'$ and for all set of outcomes $O\subseteq Range(\mathbf{A})$ such that
		$$ Pr[\mathbf{A}(w)\in O] \leq e^\epsilon Pr[\mathbf{A}(w')\in O]+\delta,$$
		algorithm $\mathbf{A}$ is said to be $(\epsilon,\delta)$-differentially private, and $\epsilon$-differentially private on $\mathbf{G}$ if $\delta=0$.

	\end{definition}
The privacy guarantees may be achieved through the introduction of noise to the output. In order to achieve $(\epsilon, 0)$-differential privacy, for example, the noise added typically comes from the Laplace distribution (the so-called Laplace mechanism will be introduced formally later). Intuitively, differential privacy requires that after removing any observation, the output of $w'$ should not be too different from that of the original data set $w$. Smaller $\epsilon$ and $\delta$ indicate stronger privacy, which, however, usually sacrifices utility. Thus, one of the central topics in the differential privacy literature is to balance the utility-privacy trade-off.

Statistically, one merit of differential privacy is that, different DP algorithms can be integrated together with provable privacy guarantee.

\begin{lemma}[Composition of DP~\citep{DBLP:conf/focs/DworkRV10}] \label{lem:ACT}
For any $\epsilon, \delta, \delta' \geq  0$, the adaptive composition of $k$ times
$(\epsilon, \delta)$-differentially private mechanisms is $(\epsilon', k\delta + \delta'
)$-differentially private for
$$ \epsilon’ = \sqrt{2k\log(1/\delta’)} \cdot \epsilon + k \cdot \epsilon(e^\epsilon-1),$$
which is $O(\sqrt{k\log (1/\delta')}\cdot \epsilon)$ when $k \leq 1/\epsilon^2$. In particular, if $\epsilon' \in (0,1),\delta'>0$, the composition of $k$ times $(\epsilon,0)$-differentially private mechanism is $(\epsilon',\delta')$-differentially private for
$$ \epsilon=\epsilon'/(\sqrt{8k \log (1/\delta')}). $$
	
\end{lemma}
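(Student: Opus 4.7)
The plan is to follow the classical advanced composition argument of Dwork--Rothblum--Vadhan by analyzing the privacy loss as a sum of bounded random variables with a controlled mean, and then applying a sub-Gaussian/Azuma-type concentration inequality.

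First I would fix two neighboring weight functions $w \sim w'$ and, for each of the $k$ adaptive mechanisms $\mathbf{A}_1,\dots,\mathbf{A}_k$, introduce the privacy loss random variable
\[
L_i(o_{<i},o_i) \;=\; \log\frac{\Pr[\mathbf{A}_i(w; o_{<i}) = o_i]}{\Pr[\mathbf{A}_i(w'; o_{<i}) = o_i]},
\]
where $o_{<i}$ is the (possibly adaptive) transcript of the first $i-1$ outputs and the law is taken under $w$. The total loss is $L = \sum_{i=1}^k L_i$, and the standard reduction shows that if $\Pr_w[L > \epsilon'] \leq \delta'$ (plus the $k\delta$ slack coming from each $(\epsilon,\delta)$-mechanism), then the composition is $(\epsilon', k\delta+\delta')$-DP. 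So the task reduces to concentrating $L$.

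Next I would establish two pointwise properties of each $L_i$, conditioned on $o_{<i}$. From the $(\epsilon,0)$-DP guarantee applied in both directions one gets $|L_i|\leq \epsilon$ almost surely (after conditioning out the $\delta$-event, which contributes the additional $k\delta$ to the failure probability in the $(\epsilon,\delta)$ case). A short calculation using $e^{\epsilon}+e^{-\epsilon}-2 \le \epsilon(e^\epsilon-1)$ then shows that the conditional expectation satisfies $\mathbb{E}[L_i\mid o_{<i}] \le \epsilon(e^\epsilon-1)$; this is the key ``second moment saves a factor of $\sqrt{k}$'' observation.

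With these two properties in hand, $M_i := L_i - \mathbb{E}[L_i\mid o_{<i}]$ forms a bounded martingale difference sequence with $|M_i|\le 2\epsilon$ (or, by a sharper argument, one treats $L_i$ directly as a sub-Gaussian increment with parameter $\epsilon$). Applying Azuma--Hoeffding to $\sum_i M_i$ gives, with probability at least $1-\delta'$,
\[
\sum_{i=1}^k L_i \;\le\; k\cdot \epsilon(e^\epsilon-1) + \sqrt{2k\log(1/\delta')}\cdot\epsilon,
\]
which is exactly the claimed $\epsilon'$. Combining with the $k\delta$ slack absorbed from conditioning on the ``good'' event of each $(\epsilon,\delta)$-mechanism yields the $(\epsilon', k\delta + \delta')$ bound. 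The final ``in particular'' statement is then just algebra: for $\epsilon\le 1$ the quadratic term $k\epsilon(e^\epsilon-1) = O(k\epsilon^2)$ is dominated by the $\sqrt{2k\log(1/\delta')}\cdot\epsilon$ term as long as $k \le 1/\epsilon^2$, and inverting $\epsilon' = \sqrt{8k\log(1/\delta')}\cdot\epsilon$ for $\epsilon$ gives the stated formula.

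The main obstacle I anticipate is the bookkeeping around the $(\epsilon,\delta)$ case (as opposed to $(\epsilon,0)$): one must carefully split the probability space into the ``typical'' region where $|L_i|\le\epsilon$ holds and a bad region of mass at most $\delta$ per step, and verify that the Azuma bound can be applied only on the typical region while the bad region contributes an additive $k\delta$ to the overall failure probability. Getting the constants inside the square root to match exactly $\sqrt{2k\log(1/\delta')}$ rather than a weaker $\sqrt{8k\log(1/\delta')}$ is the other delicate point; this typically requires the sharper sub-Gaussian analysis of the privacy loss rather than a crude application of Hoeffding with range $2\epsilon$.
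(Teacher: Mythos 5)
The paper does not prove this lemma at all: it is imported verbatim as background from Dwork--Rothblum--Vadhan (Lemma~\ref{lem:ACT} is stated with a citation and used as a black box in the grid-graph analysis), so there is no in-paper argument to compare yours against. Your outline is the standard proof of the cited theorem --- privacy loss random variables, the pointwise bound $|L_i|\le\epsilon$, the expectation bound $\mathbb{E}[L_i\mid o_{<i}]\le\epsilon(e^\epsilon-1)$ (which is the KL-divergence bound $D(P\|Q)+D(Q\|P)=\sum_x(P(x)-Q(x))\log\frac{P(x)}{Q(x)}\le\epsilon(e^\epsilon-1)$, using $|P(x)-Q(x)|\le(e^\epsilon-1)\min(P(x),Q(x))$), and Azuma on the centered increments --- and it is correct in structure. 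On your two worries: the constant is actually not a problem, since $L_i\in[-\epsilon,\epsilon]$ has range $2\epsilon$ and the interval form of Azuma--Hoeffding gives $\Pr[\sum_i M_i>t]\le\exp(-2t^2/\sum_i(2\epsilon)^2)=\exp(-t^2/(2k\epsilon^2))$, i.e.\ exactly $t=\epsilon\sqrt{2k\log(1/\delta')}$, no sharper sub-Gaussian analysis needed. The genuinely loose step is the $(\epsilon,\delta)$ bookkeeping: one cannot simply ``condition out the $\delta$-event,'' because the set where the loss exceeds $\epsilon$ need not have mass $\delta$ under both $w$ and $w'$ simultaneously, and conditioning changes the conditional distributions fed to the adaptive adversary. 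The correct tool (Lemma~III.3 / the ``dense model'' step in the original paper) is a coupling characterization: any $(\epsilon,\delta)$-indistinguishable pair is within statistical distance $\delta$ of an $(\epsilon,0)$-indistinguishable pair, and one substitutes these surrogate mechanisms step by step, accumulating $k\delta$ in total variation. If you fill in that lemma, your sketch becomes a complete proof.
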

	
\vspace{0.1in}
\newpage

\noindent\textbf{Private graph distance release.} In this paper, we consider the the approximate distances release problem on graphs, where the goal is to publish all the pair-wise distances (i.e., distance matrix) between all node pairs. The error is evaluated by the absolute difference between
the released/estimated distance $d_e(x,y)$ between a pair of vertices $x, y$ and the actual distance $d(x, y)$. For each pair of vertices $(x,y)$, we call $|d_e(x,y)-d(x,y)|$ the additive error of that pair. The objective is to  minimize the largest additive error among all pairs, namely, minimizing $\{ \max \{ |d_e(x,y)-d(x,y)|,(x,y)\in \mathbf{E} \} \}$ under the  constraints  that all $\{d_e(x,y), (x,y)\in \mathbf{E}\}$ are differentially private achieving Definition~\ref{def:DP}.

\vspace{0.1in}
\noindent\textbf{Technical tools.} We now introduce a few technical tools which will be used throughout the remainder of this paper.
A number of differential privacy techniques incorporate noise sampled according to the Laplace
distribution. The so-called Laplace mechanism will be frequently used in our algorithm design and analysis.
	
\vspace{0.1in}	
\begin{lemma}[Laplace mechanism~\citep{DBLP:conf/icalp/Dwork06}] \label{lem:basic}
For a function $f:\mathcal G\rightarrow \mathbb{R}$ with $\mathcal G$ the input space of graphs, define the $l_1$ sensitivity as
$$\triangle_f=\max_{G \sim G'} || f(G)-f(G')||_1,$$
where $G,G'$ are two neighboring graphs as in Definition~\ref{def:neighbor}. Let $\xi$ be a random noise drawn from $Lap(0,\triangle_f/\epsilon)$. The Laplace mechanism outputs
$$ M_{f,\epsilon}(G)  = f(G)+\xi,$$
and the approach is $\epsilon$-differentially private.
\end{lemma}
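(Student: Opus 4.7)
The plan is to verify Definition~\ref{def:DP} directly (with $\delta = 0$) from the density of the Laplace distribution. I would fix two neighboring weight functions $w \sim w'$ (equivalently, graphs $G \sim G'$ in the sense of Definition~\ref{def:neighbor}) and a measurable set $O \subseteq \mathbb{R}$, then compare $\Pr[M_{f,\epsilon}(G) \in O]$ with $\Pr[M_{f,\epsilon}(G') \in O]$ via the pointwise densities of $M_{f,\epsilon}(G) = f(G) + \xi$ and $M_{f,\epsilon}(G') = f(G') + \xi$. Since $\xi$ is a continuous random variable, the output distributions are absolutely continuous and the DP inequality can be checked density by density, then integrated over $O$.

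The key step is to bound the density ratio. Writing $b = \triangle_f/\epsilon$, the noise $\xi$ has density $p(t) = (1/2b)\exp(-|t|/b)$, so the density of the output at a point $y$ equals $p(y - f(G))$ under input $G$ and $p(y - f(G'))$ under input $G'$. Taking the ratio and simplifying the exponent gives
$$ \frac{p(y - f(G))}{p(y - f(G'))} = \exp\!\left( \frac{|y - f(G')| - |y - f(G)|}{b} \right), $$
and the reverse triangle inequality, combined with the definition of $\ell_1$ sensitivity, yields
$$ \bigl| |y - f(G')| - |y - f(G)| \bigr| \leq |f(G) - f(G')| \leq \triangle_f. $$
Hence the ratio lies in $[e^{-\epsilon}, e^{\epsilon}]$ uniformly in $y$.

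The final step is to integrate the pointwise inequality $p(y - f(G)) \leq e^{\epsilon}\, p(y - f(G'))$ over $y \in O$, which gives $\Pr[M_{f,\epsilon}(G) \in O] \leq e^{\epsilon}\, \Pr[M_{f,\epsilon}(G') \in O]$, matching the $(\epsilon,0)$-DP condition in Definition~\ref{def:DP}. By symmetry in $(G, G')$ the same bound holds in the other direction, so the conclusion is independent of which neighbor plays which role. Since $\triangle_f$ is defined as the maximum of $|f(G) - f(G')|$ over all neighboring pairs, the bound holds uniformly for every $w \sim w'$, obviating any case analysis. There is no real obstacle here; the only nontrivial ingredient is the coupling of the reverse triangle inequality with the $\ell_1$-sensitivity definition, while measurability of $O$ and finiteness of $\triangle_f$ are standard prerequisites that are implicit in the statement.
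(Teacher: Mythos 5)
The paper does not prove this lemma; it is imported verbatim from prior work (Dwork et al.) as a black-box tool, so there is no in-paper argument to compare against. Your proof is the standard and correct one for the scalar Laplace mechanism: the density-ratio bound via the reverse triangle inequality and the definition of $\triangle_f$, followed by integration over $O$, is exactly the canonical argument, and the symmetry remark correctly disposes of the two-sided requirement in Definition~\ref{def:DP} with $\delta=0$.
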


We state a concentration bound for the summation of Laplace
random variables. Even though  these results are already well known (and used heavily in the prior work~\citep{DBLP:conf/pods/Sealfon16}), we just include them for completeness.

\vspace{0.1in}

\begin{lemma}[Concentration of Laplace RV~\citep{DBLP:conf/icalp/ChanSS10}\label{lem:bound}] Let $Z$ be the sum of $n$ i.i.d. random variables $Z_1,Z_2,...,Z_n$ following $Lap(b)$, and denote $Z=\sum_1^n Z_i$. For $0<t<2\sqrt 2 bn$, we have
$$Pr[|Z|>t] \leq 2\exp(-\frac{t^2}{8nb^2}).$$
For $\gamma\in(2/e^n,1)$, with probability at least $1-\gamma$,
$$ |Z|<2\sqrt 2 b\sqrt{n} \log (2/\gamma)=O(b\sqrt{n} \log (1/\gamma)).$$
\end{lemma}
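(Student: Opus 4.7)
The plan is to prove the tail bound by a standard Chernoff–Bernstein argument using the moment generating function of a Laplace random variable, and then invert the exponential bound to obtain the high-probability statement.

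First I would compute the MGF of a single $Z_i \sim Lap(b)$: a direct integral gives $E[e^{\lambda Z_i}] = 1/(1-\lambda^2 b^2)$ for every $|\lambda| < 1/b$. Using the elementary inequality $1/(1-x) \le e^{2x}$ valid on $x \in [0,1/2]$, this implies $E[e^{\lambda Z_i}] \le e^{2\lambda^2 b^2}$ whenever $|\lambda| \le 1/(b\sqrt{2})$. This range of admissible $\lambda$ is what ultimately forces the constraint $t < 2\sqrt{2}\,bn$ in the statement.

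Next, by independence, $E[e^{\lambda Z}] = \prod_{i=1}^n E[e^{\lambda Z_i}] \le e^{2n\lambda^2 b^2}$, and Markov's inequality yields
\[
\Pr[Z > t] \;\le\; e^{-\lambda t}\, E[e^{\lambda Z}] \;\le\; \exp\!\bigl(2n\lambda^2 b^2 - \lambda t\bigr).
\]
I would now optimize by choosing $\lambda^\star = t/(4nb^2)$, which plugs in to give $\Pr[Z > t] \le \exp(-t^2/(8nb^2))$. The assumption $t < 2\sqrt{2}\,bn$ is precisely what guarantees $\lambda^\star \le 1/(b\sqrt{2})$, so the MGF bound from the previous step is legitimate. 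Applying the same argument to $-Z$ (which has the same distribution by symmetry of the Laplace law) and taking a union bound yields $\Pr[|Z| > t] \le 2\exp(-t^2/(8nb^2))$, which is the first claim.

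For the high-probability form, I would substitute $t = 2\sqrt{2}\,b\sqrt{n}\,\log(2/\gamma)$ into the tail bound. This choice makes $t^2/(8nb^2) = \log^2(2/\gamma) \ge \log(2/\gamma)$ whenever $\log(2/\gamma) \ge 1$, so that the right-hand side is at most $2\exp(-\log(2/\gamma)) = \gamma$. The lower bound $\gamma > 2/e^n$ ensures $\log(2/\gamma) < n$, and together with the implicit requirement that $\log(2/\gamma) \le \sqrt{n}$ keeps $t$ inside the admissible range $(0, 2\sqrt{2}\,bn)$, so the first bound can be applied. The main (and only) nontrivial obstacle is keeping track of the constants and the $\lambda$-range throughout; everything else is bookkeeping. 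The $O(b\sqrt{n}\log(1/\gamma))$ rewriting at the end follows because $\log(2/\gamma) = \log 2 + \log(1/\gamma) = O(\log(1/\gamma))$ as $\gamma \to 0$.
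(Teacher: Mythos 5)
Your Chernoff/MGF argument is correct and is the standard proof of this concentration bound; the paper itself offers no proof, stating the lemma only by citation to \citet{DBLP:conf/icalp/ChanSS10}, so there is nothing to diverge from. Your bookkeeping checks out: $E[e^{\lambda Z_i}]=1/(1-\lambda^2b^2)\le e^{2\lambda^2b^2}$ for $|\lambda|\le 1/(b\sqrt 2)$, the optimizer $\lambda^\star=t/(4nb^2)$ stays admissible exactly when $t\le 2\sqrt 2\,bn$, and the exponent evaluates to $-t^2/(8nb^2)$. You are also right to flag the two side conditions in the second part --- the substitution only closes when $\log(2/\gamma)\ge 1$ (i.e., $\gamma\le 2/e$), and keeping $t$ in the admissible range really requires $\gamma>2/e^{\sqrt n}$ rather than the stated $\gamma>2/e^n$; these are looseness in the lemma as transcribed, not defects in your argument.
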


	\vspace{0.05in}
	
	\section{
		Distances Release in Trees}
	
	The notion of depth bounded tree is widely used in computer science. For example, general graphs can be embedded into a tree with bounded depth $O(\log V)$~\citep{548477}, and improvements on private trees would likely lead to better private algorithms on general graphs.
	Since the previous paper~\citep{DBLP:conf/pods/Sealfon16}  could deal with the tree case with additive error $O(\log^{2.5} V$ for fixed  privacy parameters, it is natural to ask whether that bound can be improved. The answer is ``yes'',  and this paper provides a method which achieves better performance
	for the depth bounded tree.

	\newpage

	\begin{figure}[t]

		\centering
		\includegraphics[width=2.5in]{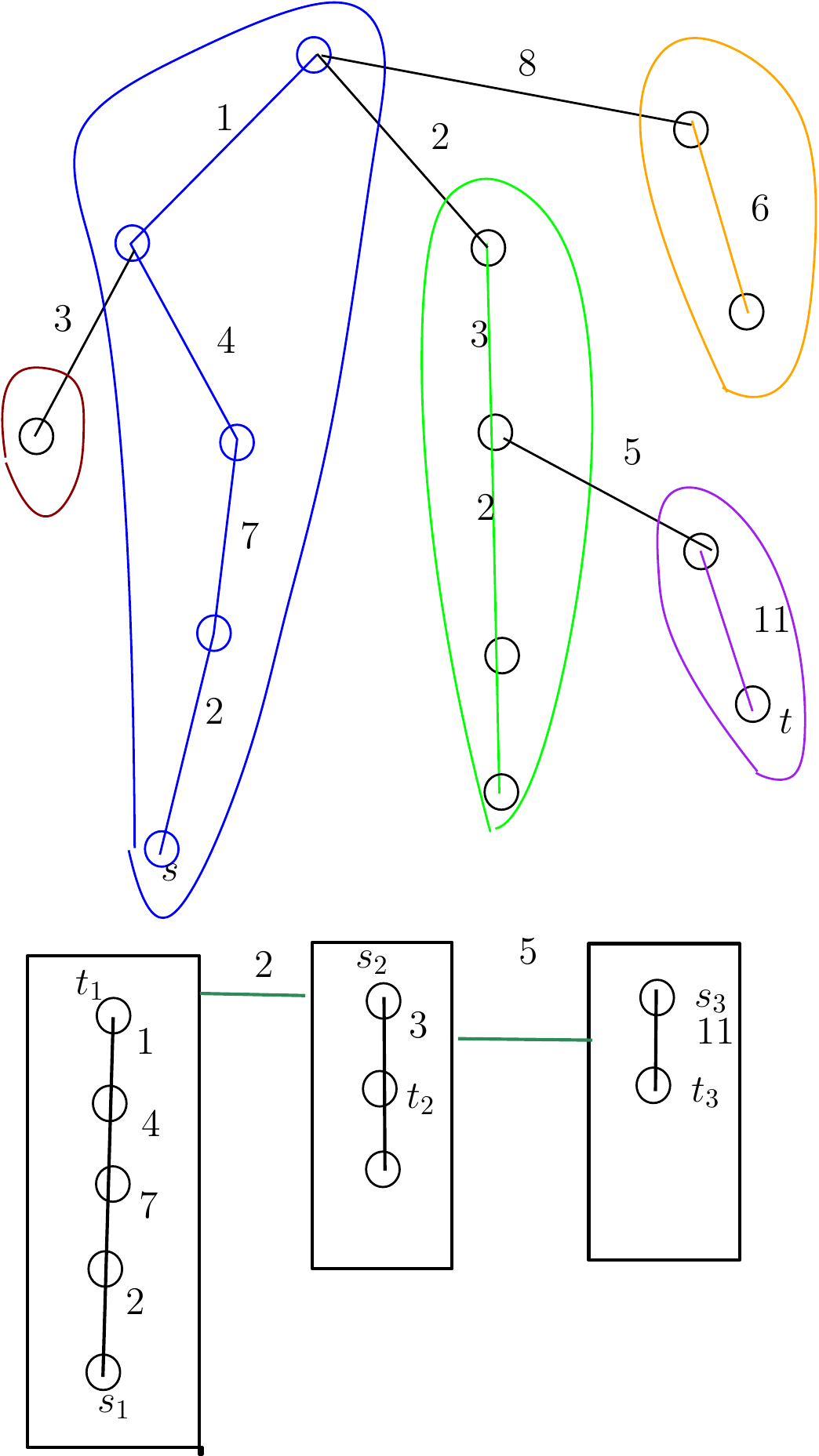}

		\caption{ The input tree $\mathbf{T}$ is partitioned into paths by classic heavy path decomposition. Each heavy path is marked with a distinct color. The shortest path between $s$ and $t$ can be decomposed into sub paths like $(s_1,t_1)$,$(s_2,t_2)$, $(s_3,t_3)$ inside disjoint heavy paths and light edge between them.}
		\label{heavy}
	\end{figure}

	We can decompose the tree into paths using the heavy path decomposition~\citep{DBLP:journals/siamcomp/HarelT84}, which is also called heavy-light decomposition and is a technique for decomposing a rooted tree into a set of paths as long as possible; see Algorithm~\ref{heavy_path_decomposition}. In a heavy path decomposition, each non-leaf node selects one branch, the edge to the child that has the largest depth (breaking ties arbitrarily). The selected edges form the paths of the decomposition, an example is given in Figure \ref{heavy}.
	Since the tree is decomposed into a set of paths, some edges may not be included in any one of the
	heavy paths produced during the decomposition process. We call those edges the ``light edge'', in comparison to those heavy edges included in the heavy~paths.
	
\vspace{0.1in}

	\begin{lemma}[Tree Decomposition~\citep{DBLP:journals/siamcomp/HarelT84}] \label{lem:light}
		For any root-to-leaf path of a tree with $V$ nodes, there can be at most $\log V$ light edges. Equivalently, the path tree has height at most $\log V$.
	\end{lemma}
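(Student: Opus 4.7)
The plan is to prove the classical halving property of heavy path decomposition. For each vertex $v$, let $s(v)$ denote the size (number of vertices) of the subtree rooted at $v$. The key claim I would establish first is: if $(u,v)$ is a light edge with $v$ a child of $u$, then $s(v) < s(u)/2$. To see this, let $v^\star$ be the heavy child of $u$; by definition of the decomposition we have $v \neq v^\star$, and $v^\star$ was chosen to have the largest subtree (interpreting the decomposition in its classical Harel--Tarjan form), so $s(v^\star) \geq s(v)$. Since the subtrees rooted at the children of $u$ are disjoint and together with $u$ itself make up the subtree rooted at $u$, we get $s(u) \geq 1 + s(v) + s(v^\star) \geq 1 + 2 s(v) > 2 s(v)$.

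Next I would chain this claim along an arbitrary root-to-leaf path $v_0, v_1, \dots, v_k$ of $\mathbf{T}$, where $v_0$ is the root and $v_k$ is a leaf. Each time the path traverses a light edge, $s(v_{i+1}) < s(v_i)/2$; heavy edges can only weakly decrease the subtree size. Since $s(v_0) = V$ and $s(v_k) \geq 1$, the number of halvings, and therefore the number of light edges on the path, is at most $\lfloor \log_2 V \rfloor$.

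For the second, ``equivalent'' assertion, I would observe that the path tree, whose nodes are the heavy paths of the decomposition and whose edges are precisely the light edges of $\mathbf{T}$, has the property that any root-to-leaf walk in it corresponds to the sequence of light edges traversed along some root-to-leaf walk in $\mathbf{T}$. Its height is therefore bounded by the maximum number of light edges on any such path, which is at most $\log V$ by the first part.

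The one bookkeeping point worth flagging as the main (minor) obstacle is the choice of the ``heavy child'' convention: the text of the paper describes it via ``largest depth'', whereas the halving argument above and the cited Harel--Tarjan reference use largest subtree size. I would adopt the subtree-size convention, since it is the one that makes the halving immediate and matches the citation; with any other convention the $\log V$ bound would need a separate justification.
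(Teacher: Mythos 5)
Your proof is correct, and it is the standard argument for this classical fact; the paper itself offers no proof at all, deferring entirely to the citation of Harel--Tarjan, so there is nothing in the text to compare against beyond the statement. Your halving claim ($s(v) < s(u)/2$ across a light edge), the chaining along a root-to-leaf path, and the translation to the height of the path tree are all sound. The point you flag at the end is not merely bookkeeping, however, and deserves to be stated more forcefully: the decomposition actually written in the paper's Algorithm~\ref{heavy_path_decomposition} selects the child of \emph{largest depth}, which is the ``long-path'' decomposition, not the Harel--Tarjan heavy-path decomposition. Under that rule the lemma as stated is false in general --- the number of light edges on a root-to-leaf path can be $\Theta(\sqrt{V})$ (at the $i$-th light edge the discarded deepest sibling subtree must contain at least as many nodes as the remaining depth of the path, and summing these disjoint contributions only forces $\binom{k}{2} \leq V$). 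So your decision to adopt the subtree-size convention is not an optional convenience but the only way to make the $\log V$ bound true, and it means the algorithm in the paper would need to be amended to match the lemma it relies on.
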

	
	We now present the general idea of our algorithm. We first decompose the tree into heavy paths. Each heavy path is a
	path graph, which we can use the classic private algorithm in~\citet{DBLP:conf/pods/Sealfon16,DBLP:conf/stoc/DworkNPR10} to deal with. Also, these heavy paths are all disjoint to each other. Thus each of them can be handled separately. For those light edges, each of them can be added with a random Laplace noise according to the random variable $Lap(1/\epsilon)$.

	The query process: for a pair $(s,t)$ inside one heavy path, which is similar to the query process of path graphs. For another pair $(s,t)$ crossing several heavy paths, the path distance can be released by summing several subqueries of heavy paths and light edges.

	\begin{algorithm2e}
		\SetKwInOut{Input}{Input}
		\SetKwInOut{Output}{Output}
		\Input{ Tree  $\mathbf{T}$.}
		\Output{A set of edge disjoint paths by heavy path decomposition.}
		\DontPrintSemicolon

		Let $\mathbf{R}$  denote a set of roots of subtrees in $\mathbf{T}$.\;

		Let $P$ denote a path (a set of consecutive nodes in $\mathbf{T}$ ).

	\If {$\mathbf{T}$ is a leaf }
	{
	Add $\mathbf{T}$ to $P$.\;
	}
		Let node pointer $p$ point to the root of $\mathbf{T}$.\;

		\While {$p$ is not null}
		{
			Add $p$ to $P$.
			\;
		Compute the depth of each children of $\mathbf{T}$.\;
		Let $p$ point to the children of $\mathbf{T}$ with maximum depth.\;
		Add another children nodes of $\mathbf{T}$  than $p$ to $\mathbf{R}$.
		}

		\For {node $v$ in  $\mathbf{R}$}
		{
			Call  recursive Tree-Decomposition$(v)$.\; Add the obtained paths into $S$.\;

		}

		Let $S=\{ P_1,P_2,...,P_k\}$ be a set of heavy paths produced in the decomposition above.\;
		Return    $S$.\;
		\caption{  Tree-Decomposition$(\mathbf{T})$: To partition a tree $\mathbf{T}$ into a set of edge disjoint paths.}
		\label{heavy_path_decomposition}
	\end{algorithm2e}

	\begin{algorithm2e}
		\SetKwInOut{Input}{Input}
		\SetKwInOut{Output}{Output}
		\Input{Length bounded Tree  $\mathbf{T}$, private  parameter $\epsilon$.}
		\Output{Private all pairwise distances of vertices in  $\mathbf{T}$.}
		\DontPrintSemicolon
			Call Tree-decomposition algorithm in  Algorithm~\ref{heavy_path_decomposition} on $\mathbf{T}$. \;
		Let $S=\{ P_1,P_2,...,P_k\}$ be a set of heavy paths produced in the decomposition above.\;
		
		\For {each heavy path $P_i$ in S}
		{
			Call  the Algorithm 1 in~\citet{DBLP:conf/pods/Sealfon16} to release private all pairwise distances  of path $P_i$ with privacy parameter $\epsilon$.\;

		}
		
		\For {each light edge $(u,v)$  }
		{
			$d_e(u,v):= d(u,v)+$Lap$(1/\epsilon)$.  \;
		}
		
		\For { each pair $(s,t)$ }
		{
			Find the path $P(s,t)$  from  vertex $s$ to $t$.\;

			The shortest path between $s$ and $t$ can be decomposed into sub paths like $(s_1,t_1)$, $(s_2,t_2)$, $(s_3,t_3),...,$ into  disjoint heavy paths.\;
			\For {each heavy path intersecting with $P(s,t)$ }
			{
				Add the released distance $d_e(s_i,t_i)$ to $d_e(s,t)$.\;
			}
			\For {each light edge intersecting with $P(s,t)$ }
			{
				Add the release distance of that
				light edge to $d_e(s,t)$.\;
			}

		}
		
		Return    all pairwise  distances released   above.\;
		\caption{Private algorithm to release all pairwise distances  for  tree.}
		\label{Path_Tree}
	\end{algorithm2e}

\newpage

	\begin{theorem}[All Pairwise Shortest Path Distances on Rooted Trees]
		\label{thm:all_pair}
			Let $\mathbf{T} = (\mathbf{V}, \mathbf{E})$ be a tree with $V$ vertices, $\epsilon> 0$, Algorithm \ref{Path_Tree} is $\epsilon$-differentially private on $\mathbf{T}$ and 		releases
all-pairs distances such that with probability $1-\gamma$ , all released distances have additive error bounded by
	$$O((\log^{1.5} V\cdot \log^{1.5}h)\log(1/\gamma))/\epsilon,$$
	which is
		$$O(\log^{1.5} V \cdot \log\log^{1.5} V \cdot \log (1/\gamma))/\epsilon$$
		when $h=O( \log V).$
	\end{theorem}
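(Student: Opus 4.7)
The plan is to split the argument into a privacy part and a utility part. For privacy, observe that the heavy paths produced by Algorithm~\ref{heavy_path_decomposition} partition the edges of $\mathbf{T}$, with the light edges forming the complement. Each heavy path is processed by an independent invocation of Sealfon's path-graph subroutine with parameter $\epsilon$, and each light edge is released via an independent Laplace draw $\mathrm{Lap}(1/\epsilon)$. Since these sub-mechanisms act on pairwise-disjoint edge sets, I would invoke the standard parallel-composition argument for Laplace-type mechanisms: a neighbor pair $w\sim w'$ with $\|w-w'\|_1\le 1$ induces an input perturbation of size $c_i$ to sub-mechanism $i$, with $\sum_i c_i\le 1$, so multiplying the per-mechanism privacy loss factors $e^{c_i\epsilon}$ yields the overall factor $e^\epsilon$. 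The final reconstructions $d_e(s,t)$ are post-processings of the released quantities and inherit the same guarantee.

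For utility, I would first apply Lemma~\ref{lem:light} to write the unique $s$-to-$t$ path in $\mathbf{T}$ as the concatenation of at most $\log V$ sub-paths $(s_i,t_i)$, each lying inside a single heavy path, together with at most $\log V$ light edges bridging them. Since $\mathbf{T}$ has depth $h$, every heavy path has at most $h$ vertices, and the subroutine of~\citet{DBLP:conf/pods/Sealfon16,DBLP:conf/stoc/DworkNPR10} realizes each released sub-distance as a sum of $O(\log h)$ Laplace noises of scale $O(\log h/\epsilon)$ produced by the binary-tree partial-sum construction. Consequently, the error $d_e(s,t)-d(s,t)$ is the sum of $O(\log V\cdot\log h)$ independent Laplace variables of scale at most $O(\log h/\epsilon)$, plus $O(\log V)$ Laplace variables of scale $1/\epsilon$ from the light edges.

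Applying the concentration bound of Lemma~\ref{lem:bound} to this pooled sum, with the dominating scale $O(\log h/\epsilon)$ and $O(\log V\log h)$ summands, and taking a union bound over all $V^2$ ordered pairs by substituting $\gamma/V^2$ for the failure probability, produces
\[
|d_e(s,t)-d(s,t)| \;=\; O\!\left(\frac{\log h}{\epsilon}\sqrt{\log V\cdot\log h}\,\log(V^2/\gamma)\right) \;=\; O\!\left(\frac{\log^{1.5}V\cdot\log^{1.5}h\cdot\log(1/\gamma)}{\epsilon}\right),
\]
simultaneously for every pair with probability at least $1-\gamma$. The stated corollary for $h=O(\log V)$ then follows by direct substitution.

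The main obstacle will be the bookkeeping at the last step: a naive route in which one first bounds each heavy-path sub-distance separately and then sums the $\log V$ worst-case errors would cost an extra $\log V$ factor and yield only $O(\log^{2}V\cdot\log^{1.5}h)$. To avoid this, the argument must resist treating the sub-path errors as deterministic and instead pool all of the underlying Laplace noises (from every heavy-path subroutine traversed plus every light edge on the way) into a single concentration inequality, so that one pays only $\sqrt{\log V}$ rather than $\log V$ for combining the pieces. A secondary care point is verifying that Sealfon's path subroutine itself is $\epsilon$-DP under the weight-edge neighbor definition of Definition~\ref{def:neighbor}, but this is inherited from the $\ell_1$-sensitivity-$1$ property of the partial-sum queries used inside the DNPR mechanism.
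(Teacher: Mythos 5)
Your proposal is correct and follows essentially the same route as the paper's proof: parallel composition over the disjoint heavy paths and light edges for privacy, and for utility the decomposition of the $s$–$t$ path into at most $\log V$ heavy-path segments (each contributing $O(\log h)$ Laplace noises of scale $O(\log h/\epsilon)$ via the binary-tree path subroutine) plus $\log V$ light edges, pooled into a single application of Lemma~\ref{lem:bound} followed by a union bound over all pairs. The "main obstacle" you flag --- pooling all Laplace variables rather than summing per-segment worst-case errors --- is exactly the step the paper takes when it bounds $\sum_i \log L(s_i,t_i)$ by $O(\log V\cdot\log h)$ inside a single concentration inequality.
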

	
	\begin{proof}
		Since each of heavy paths and light edges are disjoint to each other, we can assign $\epsilon$
		privacy to each of them, which gives $\epsilon$-DP in total.
		Let $S$ be a set of disjoint heavy paths of $T$.
		Let $D_i$ denote the length of $i$-th heavy path of $S$ ($|S|\leq \log V$), and $d_e(s_i,t_i)$ be the sub query inside the $i$-th path needed for the pair $(u,v)$. The length of each heavy path is bounded by the depth $h$ of the tree.
	The additive error of each pair $d_e(s_i,t_i)$ inside path $(s_i,t_i)$ is the sum of $\log L(s_i,t_i)$ Laplace random variable according to $Lap(\log h/\epsilon)$ based on~\citet{DBLP:conf/pods/Sealfon16} and the
	link length of each heavy path is bounded by $h$, and $L(s_i,t_i)$ denotes the link length between $s_i$ and $t_i$.
		
		For those heavy paths between $s$ and $t$,  the additive error of each of them is decided by sum of   $\log L(s_i,t_i)$ Laplace random variables each  following $Lap(\log h/\epsilon)$.
		Hence, the sum of additive error of those heavy paths
		is determined by  the sum of  $\sum \log L(s_i,t_i)$ Laplace random variables each according to $Lap(\log h/\epsilon)$.  We know that $\sum L(s_i,t_i) \leq 2h$ because  the link length of any path is less or equal than $2h$.
		With probability $1-\gamma$ this error is bounded by
		$$O(\sqrt{\sum \log L(s_i,t_i) } (\log h) \log(1/\gamma) )/\epsilon$$
		based on Lemma~\ref{lem:bound}.
	As the path between $(s,t)$ passes at most $\log V$ light edges based on Lemma~\ref{lem:light}, with probability $1-\gamma$, the error of this part is bounded by
		$$O(\sqrt{\log V}\log(1/\gamma))/\epsilon$$
	based on	Lemma~\ref{lem:bound}.
		Therefore, the total sum of additive error between $(s,t)$ is bounded  by
		$$O(\sqrt{\sum (\log L(s_i,t_i))}  \log h+\sqrt{\log V})\frac{\log(1/\gamma)}{\epsilon}.$$
		On the other hand, ${\sum \log L(s_i,t_i) }$ can be bounded by  $O((\log V)\log h)$ as $L(s_i,t_i)\leq h$ and the path from $s$ to $t$ passes at most $O(\log V)$ heavy paths.
		Combining parts together, the additive error is bounded  by
		$$O(\log^{0.5} V\log^{1.5} h  +\log^{0.5} V)\frac{\log(1/\gamma)}{\epsilon}.$$
	By a union bound, for  any $\gamma \in (0, 1)$, with
probability at least $1-\gamma$, each error among the $O(V^2)$  all-pairs distances released is at most
$O(\log^{1.5} V\cdot \log^{1.5}h )\frac{\log(1/\gamma)}{\epsilon}.$
	\end{proof}

\section{Distances Release in Grid Graph}
	
	For a general graph $\mathbf{G}(\mathbf{V},\mathbf{E})$, one can pick a subset $U\in \mathbf{V}$ of vertices, the so called  ``$k$-covering set'', to approximate the graph when the weight is bounded. That approach could not be extended to more general settings, whereas in this paper we consider the distance release  in grid graph with arbitrary weights.

	


	
 The general idea is as follows. Let $\mathbf{G}$ be the $\sqrt{V} \times \sqrt{V}$
	grid. A path from $s$ to $ t$ can be divided into three parts by two immediate vertices $(u,v)$:  $(s,...,u),(u,..., v), (v,...,t)$.
 In order to obtain the immediate vertex set, 	
	we divide the grid into $\sqrt{V}$ blocks  with size $V^{1/4} \times V^{1/4}$.
	Let $B_1,B_2,...,B_{\sqrt{V}}$ denote the $\sqrt{V}$ blocks respectively.
	Let set $A_i$ denote the set of  vertices located on the boundary of  $B_i$, and the size of $A_i$ is
	$O(V^{1/4})$.
	Let set $A=\bigcup_{i}  A_i$, then $A$ is the immediate vertex set for grid graph.
 The additive error of $d_e(s,u)$ is decided by the number of edges from  $s$ to $u$, and similar analysis for $d_e(v,t)$. The additive error of $d_e(u,v)$ is  the noise added to $d(u,v)$, which depends on the size of the immediate vertex set $A$, specified next.

	For each pair vertices $(u,v)$  in  $A$,   we add Laplace random noise
	$Lap(V^{3/4}( \sqrt{8\log 1/\delta})/\epsilon)$ to the released distance, namely $d_e(u,v):=d(u,v)+Lap(V^{3/4}( \sqrt{8\log 1/\delta})/\epsilon)$, recall that $d(u,v)$ is the exact distance between $u$ and $v$.
	For each pair vertices $(x,y)$ in  each $B_i$, add Laplace random noise
	$Lap(V^{3/4}( \sqrt{8\log 1/\delta})/\epsilon)$ to the released distance, namely $d_e(x,y):=d(u,v)+Lap(V^{3/4}( \sqrt{8\log 1/\delta})/\epsilon)$.
	For a pair $(s,t)$ such that $s$ is located in $B_i$ while $t$ is  located in some $B_j$. We release
	$d_e(s,t)$ as follows.
	$$d_e(s,t):=\min_{u\in A_i,v \in A_j} \{ d_e(s,u)  +d_e(u,v)+d_e(v,t) \}.$$


	An illustration is given in Figure  \ref{distance}. In Theorem~\ref{thm:graph}, we show that for any $\epsilon, \gamma \in (0, 1)$ and $\delta > 0$, one can release
	with probability $1 -\gamma$ all-pairs distances each with additive  error $O(V^{3/4}\log (V/\gamma)  \sqrt{\log 1/\delta})/\epsilon$.
	
	\begin{figure}[h]
\vspace{0.1in}

		\centering
		\includegraphics[width=2.2in]{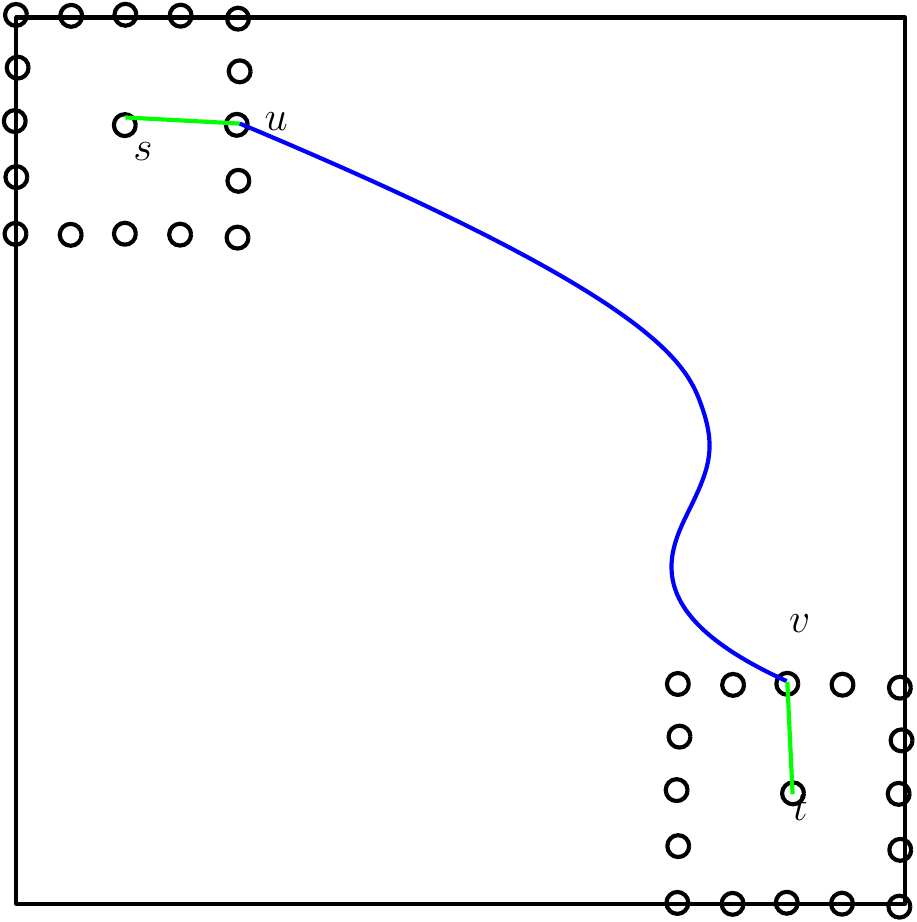}
		\caption{Illustration of released distance  between $(s,t)$ in two different blocks.}
		\label{distance}
	\end{figure}

	\begin{algorithm2e}[h]
		\SetKwInOut{Input}{Input}
		\SetKwInOut{Output}{Output}
		\Input{An instance $\mathbf{G}(\mathbf{V},\mathbf{E})$ , distances  $\{w(e),e\in \mathbf{E}\}$, parameter $\epsilon,\delta$.}
		\Output{All pairwise distances of $\mathbf{G}$.}
		\DontPrintSemicolon

		Divide the grid into $V$ blocks  with size $V^{1/4} \times V^{1/4}$.
		Let $B_1,B_2,...,B_{V^{1/2}}$ denote the $V^{1/2}$ blocks respectively.\;
		Let set $A_i$ denote a set of vertices located on the boundary of  $B_i$.\;
		Let set $A=\bigcup_{i}  A_i$.\;
		
		\For { each  $u, v \in A$}
		{
			Compute the exact distance $d(u,v)$. \;
			
			Let the released distance between $u$ and $v$ be $d_e(u,v):= d(u,v)+ Lap(V^{3/4} \sqrt{8\log 1/\delta}/\epsilon)$.\;
			
		}
		
		\For {each block $B_i$}
		{
			
			\For { each $(u,v) \in B_i$ }
			
			{
				Compute the exact distance $d(u,v)$. \;
				
				Let the released distance between $u$ and $v$ be $d_e(u,v):=  d(u,v)+ Lap(V^{3/4} \sqrt{8\log 1/\delta}/\epsilon)$.\;

			}
		}
		
		\For { each $(s,t)$ in two distinct blocks $B_i,B_j$}
		{
			
			$d_e(s,t):= \min_{u\in A_i,v \in A_j} \{ d_e(s,u)  +d_e(u,v)+d_e(v,t) \} $\;
		}
		Return  $d_e(s,t)$.\;
		\caption{Private Algorithm for Grid Graph.}
		\label{graph}
	\end{algorithm2e}

	
	\begin{theorem}\label{thm:graph}
		Given any input $\mathbf{G} ($  a grid graph  $\sqrt{V} \times \sqrt{V})$
		, and parameter $\epsilon,\delta>0$, Algorithm~\ref{graph}  is $(\epsilon,\delta)$-differentially private.  With probability $1-\gamma$,
		   Algorithm~\ref{graph} releases all pairwise distances,
		   for each distance,
		   with probability $1-\gamma$ the additive error is  $O(V^{3/4}( \sqrt{\log 1/\delta}) \log(V/\gamma))/\epsilon.$
	\end{theorem}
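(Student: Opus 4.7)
The plan is to prove the two claims of the theorem separately: first that Algorithm~\ref{graph} is $(\epsilon,\delta)$-differentially private, and then that with high probability every released distance satisfies the stated additive error bound. Both arguments rest on the Laplace mechanism (Lemma~\ref{lem:basic}), the advanced composition theorem (Lemma~\ref{lem:ACT}), and the Laplace tail bound (Lemma~\ref{lem:bound}), combined with a structural observation about how shortest paths in a grid must cross block boundaries.

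For privacy, I would first argue that each individual noisy release has sensitivity at most one: for any pair $(u,v)$, neighboring weight functions $w\sim w'$ yield $|d(u,v)-d'(u,v)|\le \|w-w'\|_1 \le 1$ since any simple path uses each edge at most once. Consequently each release $d_e(u,v) = d(u,v) + \mathrm{Lap}(b)$ with $b = V^{3/4}\sqrt{8\log(1/\delta)}/\epsilon$ is $\epsilon_0$-differentially private with $\epsilon_0 = 1/b = \epsilon/(V^{3/4}\sqrt{8\log(1/\delta)})$. Next I would count the releases: the boundary set has $|A| = O(V^{1/2}\cdot V^{1/4}) = O(V^{3/4})$ vertices, producing $O(V^{3/2})$ boundary releases; there are $\sqrt{V}$ blocks each with $V^{1/2}$ vertices, contributing $O(\sqrt{V}\cdot V) = O(V^{3/2})$ within-block releases. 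With $k = O(V^{3/2})$ independent Laplace invocations, Lemma~\ref{lem:ACT} gives composed parameter $\sqrt{8k\log(1/\delta)}\cdot\epsilon_0 \le \epsilon$, as desired. The final \texttt{min} step is post-processing on privately released quantities and thus free.

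For utility, the key observation is that for $s\in B_i$ and $t\in B_j$ with $i\neq j$, any $s$-$t$ path in the grid must leave $B_i$ through some vertex of $A_i$ and enter $B_j$ through some vertex of $A_j$, so
$$ d(s,t) = \min_{u\in A_i,\,v\in A_j}\bigl\{ d(s,u)+d(u,v)+d(v,t) \bigr\}. $$
Let $Z_{x,y}$ denote the Laplace noise attached to the release $d_e(x,y)$. Plugging the exact minimizers $u^\star,v^\star$ into the algorithm's formula yields $d_e(s,t)-d(s,t) \le |Z_{s,u^\star}|+|Z_{u^\star,v^\star}|+|Z_{v^\star,t}|$; conversely, for the algorithm's choice $\hat u,\hat v$, the triangle-style inequality $d(s,\hat u)+d(\hat u,\hat v)+d(\hat v,t) \ge d(s,t)$ gives $d_e(s,t)-d(s,t) \ge -(|Z_{s,\hat u}|+|Z_{\hat u,\hat v}|+|Z_{\hat v,t}|)$. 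Either way the error is at most three times the global maximum noise magnitude. By Lemma~\ref{lem:bound}, each individual $\mathrm{Lap}(b)$ variable has magnitude $O(b\log(1/\gamma'))$ with probability $\ge 1-\gamma'$; setting $\gamma' = \gamma/k$ and union-bounding over all $O(V^{3/2})$ releases, all noise magnitudes are simultaneously bounded by $O(b\log(V/\gamma)) = O(V^{3/4}\sqrt{\log(1/\delta)}\log(V/\gamma))/\epsilon$ with probability at least $1-\gamma$. Pairs inside a common block trivially inherit the same bound (only one noise term), and the overall error bound applies uniformly to all $O(V^2)$ queries.

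The main obstacle — or rather the one nontrivial geometric step — is justifying the boundary-crossing identity for $d(s,t)$; once that is established, the rest is a bookkeeping exercise combining a sensitivity-one argument, the choice of noise scale tuned so that advanced composition collapses to $\epsilon$, and a union bound set to absorb the logarithmic factor $\log k$ into $\log(V/\gamma)$. A minor subtlety is that the \texttt{min} over $u\in A_i, v\in A_j$ is a random post-processing step, so one must argue the error bound using the two-sided sandwich above rather than naively tracking the noise on a single triple.
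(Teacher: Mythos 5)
Your proposal is correct and follows essentially the same route as the paper's proof: per-pair sensitivity one, advanced composition (Lemma~\ref{lem:ACT}) over the $O(V^{3/2})$ directly released pairs to justify the noise scale $V^{3/4}\sqrt{8\log(1/\delta)}/\epsilon$, the boundary-crossing identity for cross-block pairs, and a union bound over the Laplace tails with a factor of three for the concatenation. Your write-up is in fact more careful than the paper's on two points it glosses over — the two-sided sandwich argument needed because the $\min$ is taken over noisy quantities, and the explicit count of releases — but these are elaborations of the same argument, not a different one.
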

	
	\newpage
	
	\begin{proof}
		For the privacy part, only  $O(V^{3/2})$ pairs of distances are computed directly, other pairs are composed of by them. Hence adding noise according to  $ Lap(V^{3/4}(\sqrt{8\log 1/\delta})/\epsilon)$ suffice to guarantee privacy based on the Lemma~\ref{lem:ACT} and Laplace Mechanism.
		
		For the utility part, the pair  $(s,t)$ in two distinct blocks $B_i,B_j$,  the shortest path between $B_i,B_j$ has to pass two points $u,v$ such that  $u\in A_i$	and $v\in A_j$.
		Hence 	$d_e(s,t)=\min_{u\in A_i,v \in A_j} \{ d_e(s,u)  +d_e(u,v)+d_e(v,t) \} $ can get the shortest path distance  between $s$ and $t$.
		With probability $1-\gamma$,  each of  the total $V^{3/2}$  Laplace random variables according to $O(V^{3/4}(\sqrt{\log 1/\delta})/\epsilon)$ can be bounded by $O(V^{3/4}(\sqrt{\log 1/\delta})\log (V/\gamma)/\epsilon)$. The sum of three variables increase the error by at most three times, which can be ignored.
		Hence, for any $\gamma  \in  (0, 1)$, with probability
		at least $1-\gamma$, the additive error of $d_e(s,t)$ is within
		$O(V^{3/4}\log (V/\gamma)  \sqrt{\log 1/\delta})/\epsilon$.
	\end{proof}

	\section{Conclusion}
	
	In this paper, we study the problem of releasing all pairwise distances in the private graph model as studied in~\citet{DBLP:conf/pods/Sealfon16}.
	For the problem of privately releasing all-pairs distances in trees,
the author proposed a solution to achieve  additive error $O(\log^{2.5}
V)$ for fixed privacy parameters, where $V$ is the number of vertices in the tree.  In this paper, we propose  a new   algorithm  which can release all-pairs distances with error  $O(\log^{1.5} h \cdot \log^{1.5} V )$   for fixed privacy parameters.
	Our method is  based on heavy path decomposition~\citep{DBLP:journals/siamcomp/HarelT84}, and $h$ is small in many applications.   Additionally, in this paper, we also consider the grid graph with general positive weights. Our approach is based on dividing the graph into blocks and selecting an intermediate set of vertices in distance computation. Our proposed method
	releases all-pairs distances with additive error
	$\tilde O(V^{3/4})$ for fixed privacy parameter on general grid graphs.

\bibliographystyle{plainnat}	
\bibliography{dp}

\end{document}